\renewcommand{\fnum@figure}{Fig. \thefigure}
\newcolumntype{P}[1]{>{\centering\arraybackslash}p{#1}}
\newtheorem{theorem}{Theorem}
\newtheorem{lemma}{Lemma}
\newtheorem{corollary}{Corollary}
\newtheorem{proposition}{Proposition}
\newtheorem{conjecture}{Conjecture}
\newtheorem{sketch}{Sketch of Proof}
\newcolumntype{C}[1]{>{\centering\arraybackslash}m{#1}}
\begin{document}
	
	\title{Frequency Permutation Subsets for Joint Radar and Communication
	}
	
	\author{
		Shalanika Dayarathna,~\IEEEmembership{Member,~IEEE,} Rajitha Senanayake,~\IEEEmembership{Member,~IEEE,}\\ Peter Smith,~\IEEEmembership{Fellow,~IEEE} and Jamie Evans,~\IEEEmembership{Senior Member,~IEEE}
	}
	\maketitle
	
	\begin{abstract}
		This paper focuses on waveform design for joint radar and communication systems and presents a new subset selection process to improve the communication error rate performance and global accuracy of radar sensing of the random stepped frequency permutation waveform. An optimal communication receiver based on integer programming is proposed to handle any subset of permutations followed by a more efficient sub-optimal receiver based on the Hungarian algorithm. Considering optimum maximum likelihood detection, the block error rate is analyzed under both additive white Gaussian noise and correlated Rician fading. We propose two methods to select a permutation subset with an improved block error rate and an efficient encoding scheme to map the information symbols to selected permutations under these subsets. From the radar perspective, the ambiguity function is analyzed with regards to the local and the global accuracy of target detection. Furthermore, a subset selection method to reduce the maximum sidelobe height is proposed by extending the properties of Costas arrays. Finally, the process of remapping the frequency tones to the symbol set used to generate permutations is introduced as a method to improve both the communication and radar performances of the selected permutation subset.
	\end{abstract}
	
	\begin{IEEEkeywords}
		Joint radar and communications, permutation coding, error rate, Hamming distance, maximum likelihood, ambiguity function, maximum sidelobe height, subset selection.	
	\end{IEEEkeywords}
	
	\section{Introduction}\label{Sec-Intro}
	In recent years, joint radar and communication systems have emerged as a promising paradigm to simultaneously perform radar sensing and communication using the same time/frequency resource. Applications include a variety of emerging applications such as next generation automobile, intelligent transport systems, drone surveillance, wireless positioning and activity recognition as well as defence applications with integrated battlefields \cite{JCC.2020.01.001,2973976,3070399,Ext_paper14}. 
	
	\subsection{Related Work} 
	In the area of joint radar and communications, a considerable amount of research has focused on the resource allocation and optimization of radar systems and communication systems when the two systems operate separately but share the same resources \cite{7492941,2018.08.016}. While this leads to a better utilization of resources, the separation between the two systems results in lower data rates in communications and reduced detection and estimation performance in radar sensing. On the other hand, the design of joint waveforms that can simultaneously perform wireless communication and remote sensing has gained more interest over the years due to its ability to improve the efficiency of spectrum and energy usage and minimize system size and cost \cite{3070399}. Under the joint waveform design, one category of research focuses on traditional communication waveforms to integrate radar sensing functionality whereas another category of research focuses on traditional radar waveforms where the communication bits are modulated into radar pulses. The waveform design presented in this paper focuses on the second category of research.
	
	Under the first category, protocols such as the wireless local area network standard and the standard for dedicated short-range communication in vehicles have been investigated in \cite{2774762,2758581}. However, these protocols can only support short range sensing in the order of tens of meters. On the other hand, the classical orthogonal frequency division multiplexing (OFDM) waveform has been considered for joint radar sensing and communication \cite{9049009,4977024,6875811,8359797}. In \cite{4977024}, the OFDM waveform is shown to achieve low sidelobes and high Doppler tolerance while maintaining the information transmission capacity. However, the high peak-to-average-power ratio (PAPR) of the OFDM waveform reduces the detection range of radar sensing. Taking a step further, in \cite{6875811,8359797}, the use of multiple-input multiple-output (MIMO) OFDM and/or weighted OFDM is shown to achieve low PAPR thus improving its suitability for radar sensing. Recently, the new orthogonal time frequency space (OTFS) waveform has been proposed for joint radar and communication \cite{8757044,9266546,2998583} and in \cite{8757044}, it is shown that the OTFS waveform provides a similar radar performance to that of the OFDM waveform while achieving a higher communication data rate. In \cite{2956689}, a virtual waveform in the mm-Wave band is proposed where the preamble of the communication frame is exploited as a virtual radar pulse, which is conceptually similar to the staggered pulse repetition intervals (PRI) used in long range radar waveforms. Considering a joint radar and communication network with multiple users, in \cite{2994215}, different bit intervals of multiple users are multiplexed in time, code or frequency domains to generate different radar waveforms. 
	
	Under the second category, information bits are modulated on to frequency hopping waveforms used in traditional radar applications using existing modulation schemes such as phase shift keying (PSK) \cite{7944485}. This was later extended in \cite{8378526}, where the Costas frequency codes are used to implement frequency diverse array (FDA) MIMO combined with PSK in each code. On the other hand, linear frequency modulated (LFM) radar signals have been integrated with spread spectrum \cite{2011.187}, minimum shift keying (MSK) \cite{108B012}, continuous phase modulation (CPM) \cite{7944238} and multi-carrier modulation such as OFDM \cite{8628459} to support communication. Due to limited randomness associated with the waveform parameters, the modulation of communication bits into radar waveforms is challenging and this needs to be done in such a way that the degradation in radar performance is acceptable. Taking a different approach, in \cite{3172111}, a new waveform design based on the stepped frequency radar waveform and permutation coding is proposed. By mapping the natural numbers of the information symbols to different frequency sequences, the information is modulated based on the choice of the waveform, thus maintaining the randomness of the radar waveform without limiting the communication data rate. Further, efficient methods to encode and decode the information bits are presented when all the permutation codes are used for communication. This work is later extended in \cite{2107.14396}, where the PSK based random phase modulation is combined with the permutation based random stepped frequency waveform to further improve the communication data rate. 
	
   \subsection{Motivation}       
   The new waveform design using the random stepped frequency radar waveform for joint radar and communications was first introduced in \cite{3172111} in which the randomness in frequency sequence was used to modulate information. The idea was to consider all possible permutations of frequencies, which we define as the \emph{universal permutation set}, and to generate a set of waveforms corresponding to the universal permutation set. Then, the information is modulated based on the choice of the waveform in each transmission. It was demonstrated that this approach can improve the communication data rate while maintaining good radar performance.   
   The use of the universal permutation set for data communication is desirable based on several aspects. Firstly, it provides $M!$ waveforms to chose from during the data modulation phase, thus achieving improved data rates and good radar performance compared to the LFM radar waveform. Secondly, the receiver implementation can be formulated as an assignment problem, thus enabling the design of an efficient and optimal communication receiver. Thirdly, an efficient encoding scheme can be designed, thus avoiding the use of a large look-up table. While the work in \cite{3172111} and \cite{2107.14396} was restricted by the use of universal permutation set, we generalize this to consider any subset of permutations. We note that the added flexibility can improve either the communication performance or the radar performance as required. Motivated by low rate applications such as the navigation function in vehicle-to-everything (V2X) communications where high communication reliability and good radar performance is required, in this work, we proceed to ask three interesting and important questions.
   \begin{itemize}
   	\item \emph{Question 1: Can the communication and radar performance of this waveform design be improved by selecting a subset of permutations?} We note that under the universal set, the minimum Hamming distance between any two waveforms is two. However, according to permutation coding, the communication error rate is inversely proportional to the minimum Hamming distance \cite{blake1979coding}. As such, by selecting a subset of permutations with a larger minimum Hamming distance, we can improve the communication error rate. We also note that the range and velocity estimation under radar performance depends on the transmitted permutation and that some permutations have better radar performance compared to others \cite{12967}. As such, by only selecting the permutations with good radar performance it is possible to improve the accuracy of range and velocity estimation. While there can be many possible subsets, in this paper, we look into subsets that can be obtained following a systematic approach. 
   	\item \emph{Question 2: Will the efficient Lehmer code based encoding scheme still work when only a subset of permutations are selected for transmission?} When the universal set of permutations was selected, the incoming data symbols are mapped to their corresponding waveforms using a combinatorial transform known as the Lehmer code. Under this approach, the incoming data symbol is first transformed to its natural number according to the factorial number system. Then each natural number is mapped to the rank of the corresponding unique permutation in the lexicographic order. Since, each permutation is mapped to a unique waveform, the transmitted waveform is obtained based on this selected permutation. However, when we only select some of the waveforms for information modulation, the direct mapping between the natural number and the lexicographic rank of the permutation fails. Therefore, in general, a look-up table is required for encoding and decoding purposes. Therefore, in this work, we look into the possible design of efficient encoding schemes under different subsets. 
   	\item \emph{Question 3: Can an efficient communication receiver implementation based on the Hungarian algorithm still be used when only a subset of permutations are selected for transmission?} When the universal set is considered, the optimal receiver implementation simplifies to an assignment problem. As such, the Hungarian algorithm always results in the optimal solution. However, when we only select a subset of permutations for information modulation, the Hungarian algorithm fails. This is because the Hungarian algorithm searches through all possible permutations and outputs the permutation that provides the optimal solution, but the resulting permutation might not be in the selected subset. Therefore, in this work, we investigate a receiver implementation that can be used for any subset of permutations. 
   \end{itemize}    
	   
   \subsection{Contribution and Organization}
   In response to the above questions, the contributions of this paper are as follows.
	\begin{itemize}
		\item We investigate the problem of selecting a subset of permutations in order to improve the communication and radar performances and propose an integer programming (IP) optimization based optimal receiver implementation that can be used under any subset of permutations. We also design a novel sub-optimal receiver based on the Hungarian algorithm that has a worst case computational complexity of $O(M^3)$, when the number of frequency tones used in the waveform is given by $M$. 
		\item Based on the minimum Hamming distance in permutation coding, we propose two methods of selecting a subset to improve the communication error rate. Under each method we propose a new efficient encoding scheme that can be used to map incoming data symbols to their corresponding waveforms without the help of a look-up table. Further, under the block based approach, we further show that it is possible to design an efficient optimal receiver implementation. We also derive the block error rate under each method for both additive white Gaussian noise (AWGN) channels and correlated Rician fading channels. 
		\item Generalizing the property of Costas arrays, we propose another subset selection method to improve the global accuracy of radar sensing by reducing the maximum sidelobe height. We further discuss the feasibility of combining two subsets based on the communication perspective and the radar perspective in order to improve the overall performance of the joint waveform. 
	\end{itemize}
	The validity of the analysis is illustrated using numerical examples. The rest of the paper is structured along the following lines. In Section \ref{Sec-model}, we provide the system model for the joint radar and communication based on the random stepped frequency permutation waveform. Then, we discuss the implementation of the communication receiver followed by a performance analysis in Section \ref{Sec-Rec}. Next, we discuss the selection of permutation subsets to improve communication error rate in Section \ref{Sec-Comm} and radar performance, specifically the global accuracy of target detection, in Section \ref{Sec-Radar}. Finally, numerical examples are presented in Section \ref{Sec-Simu} followed by the conclusions in Section \ref{Sec-Conc}.
	
	\section{System Model}\label{Sec-model}
	In this paper, we consider a joint radar and communication system model consisting of one transmitter, one communication receiver and a geographically separated single moving target. The transmitter sends a joint waveform that is used to detect the moving target while simultaneously performing data communication with the communication receiver. As is commonly assumed in the literature, the transmitter is considered to be a mono-static radar. Hence, the radar transmitter and the receiver are collocated and the radar receiver knows the exact waveform used for target detection at any given time. The range and the velocity of the moving target can be estimated using the return signal. At the communication receiver side, we assume that the received signal from the transmitter can be processed to decode the transmitted information. 

	This work is based on the random stepped frequency waveform proposed for the purpose of joint radar and communication in \cite{3172111}. In the following, we provide only the necessary details of this waveform design for completeness. A random stepped frequency radar waveform is generated using $M$ pulses, each with a pulse width of $T$ seconds. Let us consider $M$ frequency tones $f_0, f_1, ... f_{M-1}$ with spacing $\Delta_f$. The frequency sequence for the above random stepped frequency radar waveform is constructed by randomly arranging these $M$ frequency tones such that each frequency is used only once. For example, in the traditional linear stepped frequency radar waveform the frequencies are picked either in ascending or descending order. When we consider all possible permutations related to the $M$ frequency tones we can construct a set of $M!$ potential waveforms. These are all constant amplitude waveforms and the complex baseband signal of the $i$-th waveform can be expressed as,
	\begin{align}\label{eq1}
	s_i(t) = \sqrt{\dfrac{E}{MT}} \sum_{m=0}^{M-1} s_p(t-mT) \textrm{exp}\bigg(j2\pi f_m^i(t-mT)\bigg),
	\end{align}
	where
	\begin{align*}
	s_p(t) = \left\{
	\begin{array}{ll}
	1 \qquad \qquad & 0 \le t \le T \\
	0 & \textrm{otherwise,}
	\end{array}
	\right.
	\end{align*}
	with $f_m^i$ denoting the frequency relating to the $m$-th index of the $i$-th permutation given as $[f_0^i,f_1^i,...,f_{M-1}^i]$ and $E$ denoting the signal energy which satisfies
	\begin{align*}
	\int_{0}^{MT}s_i^2(t) \, dt = E, \qquad i = 0,1,...,M!-1.
	\end{align*}
	It is also assumed that all $M$ frequencies are orthogonal such that $\Delta_f = q/T$ where $q$ is an integer. 
	
	The information is modulated based on the choice of the waveform using a combinatorial transform known as the Lehmer code. Under this modulation scheme, the incoming data symbol is first transformed to its natural number, using the factorial number system, and then this natural number is mapped to the rank of the corresponding unique permutation in the lexicographic order. For example, if the natural number of the incoming information symbol is $i$, then the waveform $s_i(t)$ is transmitted. Therefore, the number of information symbols that can be modulated, using this system model, is proportional to the number of waveforms available for transmission. 
		
	\section{Performance Analysis in Data Communication}\label{Sec-Rec}
	Let us first assume that an arbitrary subset of waveforms, $\mathbf{\mathcal{S}}$, is selected from the complete set of $M!$ waveforms and focus on the communication performance and the receiver implementation of the joint system. Hence we start our investigation focusing on Question 3 and move to Questions 1 and 2 later in the paper. 
	
	\subsection{Maximum Likelihood Detection}
	We consider the optimal maximum likelihood (ML) detection at the communication receiver consisting of $N$ receive antennas. When the waveform, $s_i(t)\in \mathbf{\mathcal{S}}$, is transmitted, the received signal vector at the communication receiver can be expressed as,
	\begin{align}\label{eq2}
	\mathbf{r}(t) = \mathbf{h} \, s_i(t) + \mathbf{n}(t),
	\end{align}
	where $\mathbf{h}$ is the fast fading channel vector and $\mathbf{n}(t)$ is the AWGN vector. The distribution of $\mathbf{n}(t)$ is a complex Gaussian distribution where each element has a zero mean and variance $N_0$. With the assumption that the channel vector $\mathbf{h}$ is known at the receiver, the detected symbol under a coherent ML detector can be written as \cite{3172111},
	\begin{align}\label{eq4}
	\hat{s_i}(t) = {\underset{s_j(t)\in \mathbf{\mathcal{S}}}{\textrm{arg max}}}\;\mathrm{Re}\bigg(\int_{0}^{MT} s_j^{*}(t) \mathbf{h}^H \mathbf{r}(t) \, dt\bigg),
	\end{align}
	where Re(.) represents the real part of the argument. 
	
	\subsubsection{Optimal receiver implementation}
	From \eqref{eq4}, we can clearly see that under the ML decision rule, the obvious implementation of the optimal receiver involves the computation of the correlation between every potential transmit waveform and the corresponding received waveform to obtain the waveform that produces the highest correlation. We define this method of implementation as the basic optimal receiver. However, such receiver implementation has a complexity linear in the size of the selected permutation subset, which could be quite significant for large $M$. 	
	Therefore, we reformulate the optimal receiver in \eqref{eq4} as an IP optimization problem. We define the correlation matrix for the received signal as,
	\begin{align}\label{eq18}
	\mathbf{R} = (r_{uv}) \in \mathbb{R}^{M\times M},
	\end{align}
	where the $uv$-th element, $r_{uv}$, which is given as \cite{3172111},
	\begin{align}\label{eq19}
	r_{uv} = \mathrm{Re} \biggl(\int_{(u-1)T}^{uT} \mathbf{h}^H\mathbf{r}(t)\psi_v \big(t-(u-1)T\big)\, dt\biggr),
	\end{align}
	denotes the correlation between the received signal, $\mathbf{h}^H\mathbf{r}(t)$, and the basis function  $\psi_v \big(t\!-\!(u\!-\!1)T\big)$ which can be expressed as $\psi_v(t) = \sqrt{2E/T} s_p(t)\, \cos\big(2\pi f_v(t)\big)$. Therefore, $r_{uv}$ denotes the correlation between the $u$-th pulse in the received signal and the $u$-th pulse of any waveform which has the $v$-th frequency tone, $f_v$, transmitted during the $u$-th pulse. As such the optimal receiver under the ML decision rule in \eqref{eq4} is equivalent to selecting a waveform within $\mathbf{\mathcal{S}}$ such that the summation of the correlations over all $M$ pulses is maximized.
	
	When the full set of $M!$ permutations are used, the optimal receiver can be formulated as an assignment problem, where $M$ elements need to be selected to maximize the summation of those $M$ elements such that only one element is selected from each row and each column of $\mathbf{R}$. The Hungarian algorithm can be used to solve this assignment problem. However, when we consider only a subset of permutations the optimal receiver can no longer be solved as an assignment problem because the solution to the general assignment problem may not belong to $\mathbf{\mathcal{S}}$. Therefore, in this paper we reformulate the optimization problem of the optimal receiver as follows.
	\begin{align}\label{eq20}
	&{\underset{X_{uv}}{\textrm{max}}}\;\sum_{u=1}^{M}\sum_{v=1}^{M}  r_{uv} X_{uv} \nonumber \\
	&{\rm{s.t \;\;\; }}
	\sum_{u=1}^{M} X_{uv} = \sum_{v=1}^{M} X_{uv} = 1, \nonumber \\
	& \qquad \sum_{u=1}^{M}\sum_{v=1}^{M} X_{uv} Y_{uv}^{(k)} \ge 1, ~~ \forall k \notin \mathbf{\mathcal{S}},
	\end{align}
	where $Y_{uv}^{(k)}=0$ if the $v$-th element of the $k$-th permutation is $u$ and $Y_{uv}^{(k)}=1$ otherwise. We note that this integer optimization can be solved using any existing IP solver. We also note that while the average complexity of this IP based optimal receiver can be better than the basic optimal receiver, the worst case complexity of any IP solver remains exponential. 
	
	\subsubsection{Efficient sub-optimal receiver implementation}
	Given the computational complexity of the optimal receiver implementation, we next propose a sub-optimal receiver based on the Hungarian algorithm, which is more efficient than the optimal receiver. To do that, we first remove the last constraint in \eqref{eq20} which restricts the permitted permutations. Then the optimal receiver simplifies to an assignment problem, which we can solve using the traditional Hungarian algorithm \cite{Hungarian}. Next, we check whether the detected permutation is within the selected subset $\mathbf{\mathcal{S}}$ and if so the detection process is completed. If not, we generate the neighboring permutations which are included in $\mathbf{\mathcal{S}}$ and maintain a Hamming distance less than or equal to a selected distance $d$ from the detected permutation. Then we compute the correlation of the received signal with each waveform related to these neighboring permutations. For example, under the universal set, each permutation has $M(M-1)/2$ neighboring permutations with Hamming distance $2$ \cite{3172111}. Therefore, if we consider $d=2$ then for each detected permutation which does not belong to $\mathbf{\mathcal{S}}$, the correlation of the received signal with a maximum of another $M(M-1)/2$ waveforms needs to be computed. Finally, the waveform with the highest correlation is selected as the detected waveform. In Algorithm~\ref{Algorithm1}, we outline the main steps of our proposed sub-optimal receiver implementation. We also note that the worst case complexity of a simple binary search to see if a given permutation belongs to $\mathbf{\mathcal{S}}$ or not is $O(\log |\mathbf{\mathcal{S}}|)$, where $|\mathbf{\mathcal{S}}|$ is the cardinality of the selected subset $\mathbf{\mathcal{S}}$. As a result, the overall complexity of our proposed sub-optimal algorithm remains $O(M^3)$ when $d=2$ which is similar to the Hungarian algorithm. However, it is also possible to consider neighboring permutations with larger Hamming distances with a view to improve the performance of Algorithm \ref{Algorithm1}, but at the expense of higher computational complexity.
	\begin{algorithm}
		\DontPrintSemicolon 
		\SetAlgoLined		
		Obtain (-$\mathbf{R}$) by negating the correlation matrix $\mathbf{R}$\;
		$\hat{s}_i(t) \leftarrow$ waveform corresponding to the output of the Hungarian algorithm for (-$\mathbf{R}$)\; 
		\If{$\hat{s}_i(t) \notin \mathbf{\mathcal{S}}$}{
			construct $\mathbf{\mathcal{{S}}}_i$, the set of waveforms corresponding to distance $d$ neighbors of $\hat{s}_i(t)$\\
			\For{$s_j(t) \in \mathbf{\mathcal{{S}}}_i$}{
				Compute the correlation of $s_j(t)$ with $\mathbf{r}(t)$\;
			}
			$\hat{s}_i(t)$ $\leftarrow$ $s_j^{*}(t)$, which is the waveform that results in the highest correlation with $\mathbf{r}(t)$\;
		}	
		\caption{Proposed Sub-Optimal Receiver}
		\label{Algorithm1}
	\end{algorithm}	

	\subsection{Block Error Rate Performance}\label{BLER}
	Assuming that the transmitted data is equally likely, the block error rate (BLER) of decoding the received waveform can be expressed as,
	\begin{align}\label{eq5}
	P_e = \dfrac{1}{|\mathbf{\mathcal{S}}|} \sum_{i=0}^{|\mathbf{\mathcal{S}}|-1}[1-P_c(i)],
	\end{align}
	where $P_c(i)$ denotes the probability of a correct decision when waveform $s_i(t)$ is transmitted. Note that  complex multi-dimensional integrals over the multivariate Gaussian density are required for the exact computation of $P_e$. Therefore, taking a more tractable approach we consider the union bound determined by,
	\begin{align}\label{eq6}
	P_e \le P_e^{UB} = \dfrac{1}{|\mathbf{\mathcal{S}}|} \sum_{i=0}^{|\mathbf{\mathcal{S}}|-1} \sum_{j=0,j \neq i}^{|\mathbf{\mathcal{S}}|-1} P_{ij},
	\end{align}
	where $P_{ij}$ denotes the pairwise error probability of detecting $s_j(t)$ when $s_i(t)$ is transmitted. When we consider the ML detection rule given in \eqref{eq4}, $P_{ij}$ can be expressed as,
	\begin{align}\label{eq7}
	P_{ij} = \textrm{Pr} \biggl[\mathrm{Re}\bigg(\int_{0}^{MT} s_i^{*}(t) \mathbf{h}^H \mathbf{r}(t) \, dt\bigg) < \mathrm{Re}\bigg(\int_{0}^{MT} s_j^{*}(t) \mathbf{h}^H \mathbf{r}(t) \, dt\bigg)\biggr].
	\end{align}
	Using some mathematical manipulations and conditioning on channel fading, \eqref{eq7} can be re-expressed as
	\begin{align}\label{eq9}
		P_{ij} = \textrm{Pr} \biggl[\sqrt{\mathbf{h}^H \mathbf{h}} < \sqrt{\dfrac{N_0 M}{E d(i,j)}} Z\biggr],
	\end{align}
	where $d(i,j)$ denotes the Hamming distance between the permutations corresponding to $s_i(t)$ and $s_j(t)$ and $Z \sim \mathcal{N}(0,1)$. As such, we can see that the probability of detecting $s_j(t)$ when $s_i(t)$ is transmitted depends on the Hamming distance between the permutations corresponding to $s_j(t)$ and $s_i(t)$. In the following, we present analytical expressions for $P_e^{UB}$ under two traditional channel models, namely, the AWGN channel and the correlated Rician fading channel models. 
	
	\subsubsection{AWGN channel}
	Without loss of generality, a unit channel gain is assumed such that $E[\mathbf{h}^H\mathbf{h}]=N$. Whilst not included due to page limitations, following a similar approach to \cite{3172111}, the union bound given in \eqref{eq6} can be simplified to,
	\begin{align}\label{eq8.1}
	P_e^{UB} &= \sum_{l=d_{min}}^{M}  A_l \mathcal{Q}\biggl(\sqrt{\dfrac{NEl}{N_0 M}}\biggr),
	\end{align}	
	where $d_{min}$ is the minimum Hamming distance between any two permutations in the selected subset, $\mathbf{\mathcal{S}}$, $A_l$ represents the number of permutations within $\mathbf{\mathcal{S}}$ with Hamming distance $l$ from a given permutation and $\mathcal{Q}(.)$ represents the Gaussian Q-function.
	
	\subsubsection{Rician fading channel}
	To obtain further understanding about the effect of fading, next we consider the correlated Rician fading model where the relative strength of the line-of-sight (LoS) path to the scattered path is determined by the Rician factor, $K$. Thus, we can express the $N \times 1$ fast fading channel vector as $\mathbf{h} = \sqrt{\frac{K}{K+1}}\Delta +  \sqrt{\frac{1}{K+1}}\mathbf{C}_u^{1/2}\mathbf{u},$ where $\Delta$ denotes the complex LoS phase vector such that the $n$-th element, $|\Delta_n|^2 = 1$, $\mathbf{u} \sim \mathcal{CN}(0,\mathbf{I})$ and $\mathbf{C}_u$ is the $N\times N$ correlation matrix of the scattered component. Whilst not shown here due to page limitations, after some mathematical manipulations we can obtain the union bound under the correlated Rician fading model as \cite{2107.14396},
	\begin{align}\label{eq8.2}
	P_e^{UB} &= \sum_{l=d_{min}}^{M} \dfrac{A_l}{\pi} \int_{0}^{\pi/2} \biggl[\prod_{n=1}^{N}\bigg(\dfrac{\alpha_l\sin^2\theta}{\lambda_{n}+\alpha_l\sin^2\theta}\bigg)\exp\bigg(\sum_{n=1}^{N}\dfrac{-K|(\mathbf{V}^H\Delta)_n|^2}{\lambda_{n}+\alpha_l\sin^2\theta}\bigg)\biggr] d\theta,
	\end{align}
	where $\alpha_l =2N_0M(K+1)/El$, $\mathbf{V}$ is a unitary matrix, $\mathbf{\Omega}=\textrm{diag}(\lambda_1,...,\lambda_N)$ is a diagonal matrix containing the eigenvalues $\lambda_1,...,\lambda_N$ of $\mathbf{C}_u$ such that $\mathbf{C}_u = \mathbf{V}\mathbf{\Omega}\mathbf{V}^H$ and $(\mathbf{V}^H\Delta)_n$ denotes the $n$-th element of the vector $\mathbf{V}^H\Delta$. 

	In this work, we focus on two communication performance measures, namely, the BLER and the data rate. According to permutation coding, the BLER is inversely proportional to the minimum Hamming distance $d_{min}$ \cite{blake1979coding}. The larger the Hamming distance is, the smaller the BLER. Note that under the universal set of all $M!$ permutations, $d_{min}=2$. On the other hand, the communication data rate, which is given by $\log_2(M!)$ under the universal set, is proportional to the number of selected waveforms \cite{3172111}. The larger the number of selected waveforms the larger the communication data rate. As such, there is a clear trade-off between the communication data rate and the BLER.
	
	In applications such as administrative functions, low probability of intercept data transmission and navigation function in V2X communication, it is desirable to select a subset of waveforms which results in low data rate but high communication reliability and good radar performance \cite{JCC.2020.01.001}. Therefore, we next focus on Question 1 and investigate two different methods to select a subset of permutations from $M!$ permutations such that the minimum Hamming distance is increased in order to improve the BLER, at the expense of the communication data rate. We also answer Questions 2 by proposing efficient encoding schemes to map the incoming data symbols to corresponding waveforms under each subsets. 
	
	\section{Subset selection based on communication performance}\label{Sec-Comm}
	In this section, we focus on the communication performance and propose two methods of selecting a subset of waveforms such that the BLER can be improved.	
	
	\subsection{Permutation Subset with $d_{min}=2k$}\label{dmin_general}
	We first note that there exists no known method to find the maximum number of permutations with a given minimum Hamming distance, $d_{min}$, for general $M$ such that $3 < d_{min}< M$. In fact, the maximum size of the permutation subset for a given minimum Hamming distance is unknown for large $M$ \cite{52214.71}. Therefore, we first propose a block based approach where we separate $M$ frequency tones into blocks of $k$ tones such that $M/k$ remains an integer. As such, we define a new sequence of equally spaced blocks $[B_0, B_1,...,B_{M/k-1}]$ such that each block $B_j$ is mapped to a specific order of frequency tones given as $B_j=[f_{jk},f_{jk+1},...,f_{(j+1)k-1}]$. By limiting each block to be used only once, we can obtain $(M/k)!$ potential waveforms which correspond to all the permutations generated from the symbol set $\{1,2,...,M/k\}$. For example, in Fig. \ref{figure11} we illustrate two such frequency sequences under this subset for (a) $k=2$, and (b) $k=3$. As the figure clearly shows in each permutation the blocks are permuted while keeping the frequencies within the blocks fixed thus, increasing the minimum Hamming distance. More specifically, under this approach, when considering the minimum Hamming distance at least two blocks are different from each other with each block having $k$ frequency tones. Therefore, the minimum Hamming distance will increase to $2k$ while the number of total permutations in the selected subset, $\mathbf{\mathcal{S}}$, reduces to $(M/k)!$.
	\begin{figure}
		\begin{subfigure}{.5\textwidth}
			\centering
			\includegraphics[width=\textwidth]{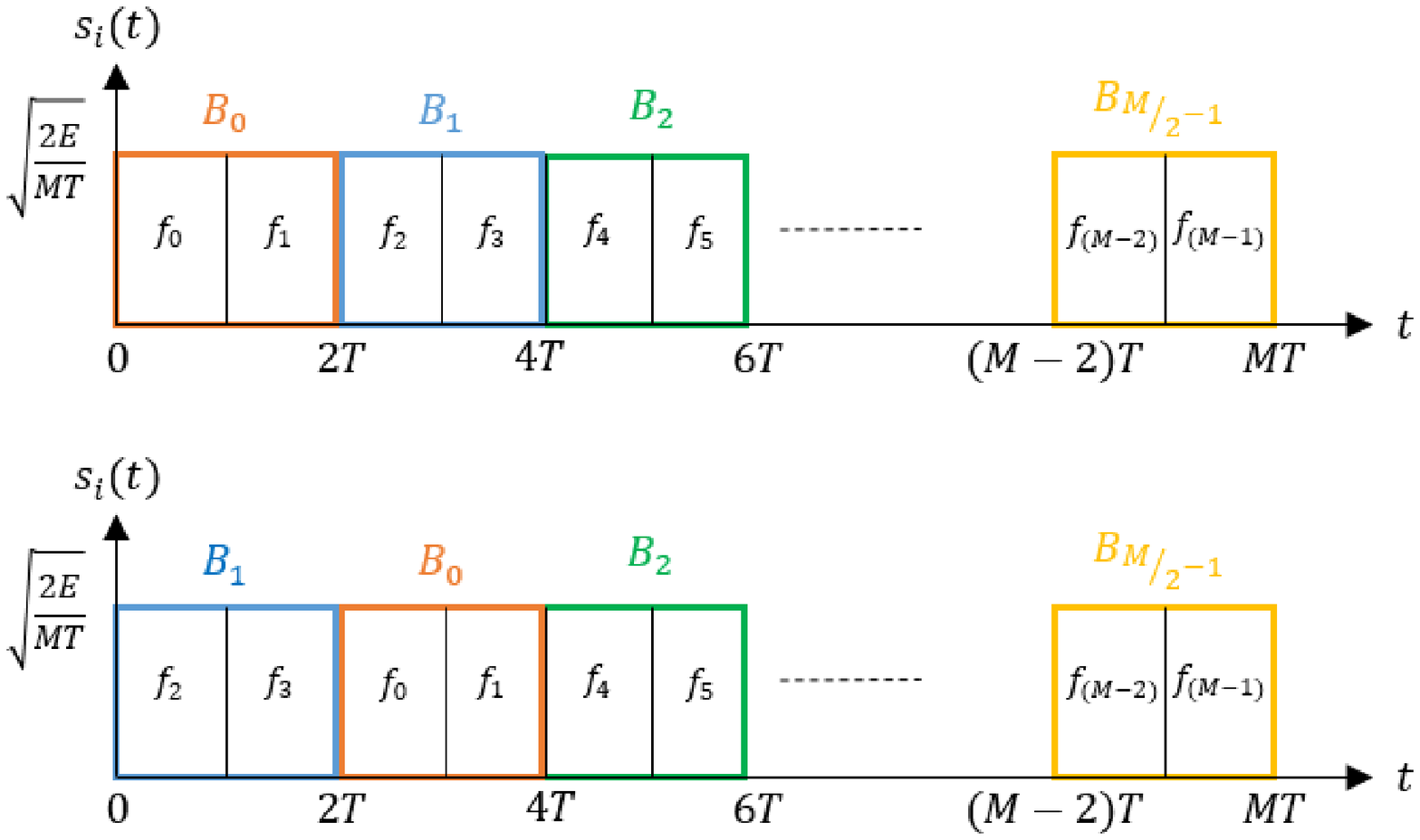}
			\caption{$k=2$}
		\end{subfigure}
		\begin{subfigure}{.5\textwidth}
			\centering
			\includegraphics[width=\textwidth]{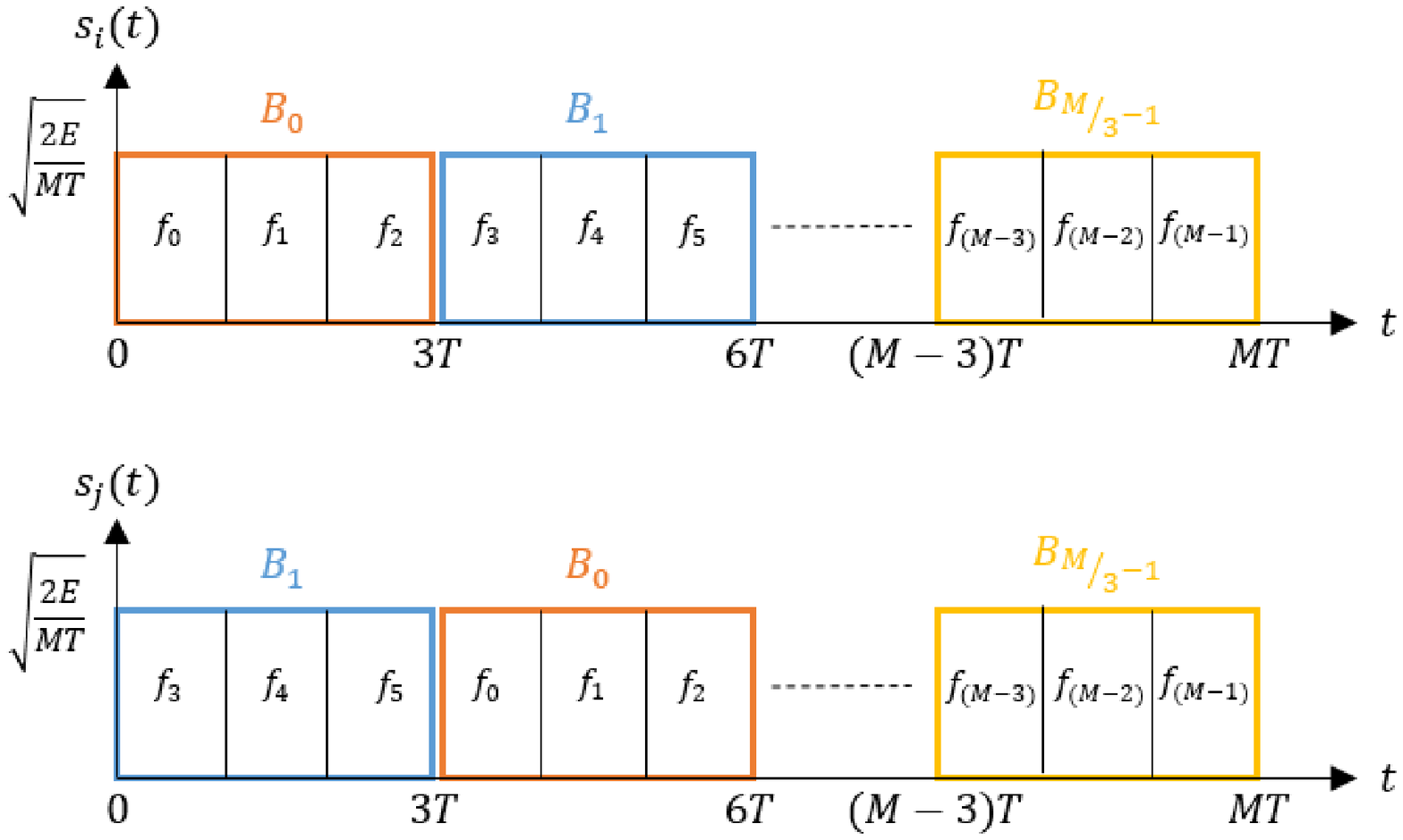}
			\caption{$k=3$}
		\end{subfigure}
		\caption{Two different frequency sequences under block based subset selection.}
		\label{figure11}
	\end{figure}
	We note that it is possible to find a larger subset with the same minimum Hamming distance of $d_{min}=2k$ by using a brute force exhaustive search of all possible subset combinations. Therefore, this block based approach limits the communication data rate that can be achieved for the same BLER. However, it provides a more feasible approach to subset selection for large $M$. In the following, we provide an efficient encoding scheme and an optimal receiver implementation with a worst case complexity of $O\big((M/k)^3\big)$ under the block based approach.
	
	\subsubsection{Encoding scheme}
	We note that due to the proposed block based approach, it is possible to map information symbols to their corresponding waveforms using the Lehmer code and factorial number system as follows. At the communication transmitter, the natural number of the incoming data symbol is first computed using the factorial number system. If the natural number of the incoming data symbol is given by $i\in\{0,1,...,(M/k)!-1\}$, we select the $i$-th  permutation, generated from the symbol set $\{1,2,...,M/k\}$ in lexicographic order. Since each block, $B_j$, is mapped to a specific order of frequency tones there exists a unique frequency mapping for each permutation generated from the symbol set $\{1,2,...,M/k\}$. Therefore, it is a straightforward conversion from the selected permutation to its corresponding frequency mapping to obtain the waveform for transmission. At the communication receiver, we first detect the permutation, generated from the symbol set $\{1,2,...,M/k\}$, corresponding to the received signal. If the $i$-th permutation in lexicographic order is detected, then the natural number of the received data symbol is taken as $i$.
	
	\subsubsection{Receiver implementation}
	In this section we present an efficient and optimal receiver implementation for the proposed block based approach. We first define a new correlation matrix
	\begin{align}\label{eq14}
	\hat{\mathbf{R}} = (\hat{r}_{uv}) \in \mathbb{R}^{M/k\times M/k},
	\end{align}
	where the $uv$-th element, $\hat{r}_{uv}$, which is given by 
	\begin{align}\label{eq15}
	\hat{r}_{uv} = \mathrm{Re} \biggl(\int_{(u-1)kT}^{ukT} \mathbf{h}^H\mathbf{r}(t)\hat{\psi}_v \big(t-(u-1)kT\big)\, dt\biggr),
	\end{align}
	denotes the correlation between the received signal,$\mathbf{h}^H\mathbf{r}(t)$, and the new basis function $\hat{\psi}_v \big(t-(u-1)kT\big)$ which can be expressed as, 
	\begin{align}\label{eq16}
	\hat{\psi}_v(t) = \sqrt{\dfrac{E}{kT}} \sum_{m=(v-1)k}^{vk-1} s_p(t-mT)\, \cos\bigg(2\pi f_m(t-mT)\bigg).
	\end{align}
	Then, the optimal receiver simplifies to selecting $M/k$ elements such that only one element is selected from each row and each column of $\hat{\mathbf{R}}$ such that the sum of the selected elements is maximized. Therefore, this optimization problem is an assignment problem and the Hungarian algorithm can be used to obtain the solution. Next, the output of the Hungarian algorithm can be converted to the corresponding permutation generated from the symbol set $\{1,2,...,M/k\}$ and the natural number related to the received data symbol can be decoded using the Lehmer code. As such the optimal receiver under this subset has a worst case complexity of $O\big((M/k)^3\big)$. 
	
	\subsubsection{Block error rate}
	Under this subset, we consider all $(M/k)!$ permutations generated from the symbol set $\{1,2,...,M/k\}$. As such, we consider the universal permutation set with regards to the blocks denoted by $B_i, \forall i=0,1,...,M/k-1$. When the frequency tones differ by $kl$ positions, we can see that there are $!l{{M/k}\choose{M/k - l}}$ pairwise error probabilities where $!b$ denotes the number of derangments of a $b$-element set and can be expressed as,
	\begin{align}\label{eq11}
	!b = b! \sum_{i=0}^{b}\dfrac{(-1)^i}{i!}.
	\end{align}
	Please note that a permutation of the sequence $\{1,2,...,b\}$ that has no fixed points is known as a derangement of a $b$-element\cite{Article03.1.2}. Therefore, considering the symmetry involved in the pairwise error probabilities, the union bounds given in \eqref{eq8.1} and \eqref{eq8.2} can be simplified to,
	\begin{align}\label{eq12.1}
	P_e^{UB} &= \sum_{l=2}^{M/k}!l {{M/k}\choose{M/k - l}} \mathcal{Q}\biggl(\sqrt{\dfrac{NEkl}{N_0 M}}\biggr), \\ 
	P_e^{UB} &= \sum_{l=2}^{M/k} {{M/k}\choose{M/k\!-\!l}} \dfrac{1}{\pi} \int_{0}^{\pi/2} \biggl[\prod_{n=1}^{N}\bigg(\dfrac{\alpha_{kl}\sin^2\theta}{\lambda_{n}+\alpha_{kl}\sin^2\theta}\bigg)\exp\bigg(\sum_{n=1}^{N}\dfrac{-K|(\mathbf{V}^H\Delta)_n|^2}{\lambda_{n}+\alpha_{kl}\sin^2\theta}\bigg)\biggr] d\theta,\label{eq12.2}
	\end{align}
	under the AWGN channel and the Rician fading channel, respectively. We also note that while the union bound provides a simple upper bound, it becomes loose for large $M$, specifically in the low SNR regime. Therefore, we also derive another approximation based on the nearest neighbor (NN) approximation \cite{Proakis_digitalCom} in which only the pairwise error probabilities with the nearest neighbors, which have a Hamming distance of $2k$, is considered as opposed to all the neighbors considered in the union bound. The resulting NN approximations can be written as,
	\begin{align}\label{eq13.1}
	P_e^{NN} &= \dfrac{(M/k)(M/k\!-\!1)}{2} \mathcal{Q}\biggl(\sqrt{\dfrac{2kNE}{N_0 M}}\biggr), \\
	P_e^{NN} &= \dfrac{(M/k)(M/k\!-\!1)}{2\pi} \int_{0}^{\pi/2} \biggl[\prod_{n=1}^{N}\bigg(\dfrac{\alpha_{2k}\sin^2\theta}{\lambda_{n}+\alpha_{2k}\sin^2\theta}\bigg)\exp\bigg(\sum_{n=1}^{N}\dfrac{-K|(\mathbf{V}^H\Delta)_n|^2}{\lambda_{n}+\alpha_{2k}\sin^2\theta}\bigg)\biggr] d\theta, \label{eq13.2}
	\end{align}
	which are more accurate than the union bounds when $M$ is large under the AWGN channel and the Rician fading channel, respectively.
		
	\subsection{Permutation Subset with $d_{min}=3$}\label{dmin_3}
	In the area of permutation arrays, it is well known that the maximum size of the permutation array with the minimum Hamming distance $d_{min}$ decreases significantly with increasing $d_{min}$ \cite{52214.71}. Therefore, while the BLER can be reduced by increasing $d_{min}$, the loss of communication data rate can become significant. As such, in this sub-section, we consider the special case of $d_{min} = 3$ which is the next best $d_{min}$ we can achieve after $d_{min} = 2$ without having a significant effect on the communication data rate. To introduce this subset selection method, first let us define some preliminary terms used with permutations. Let $\chi_i = [\chi_i(1), \chi_i(2), \ldots, \chi_i(M)]$ denote a random permutation of integers $1,2, \ldots, M$, with $\chi_i(m)$ denoting the $m$-th integer. Then,  
	\begin{itemize}
		\item the number of inversions in $\chi_i$ is defined as,
		\begin{align*}
		N(\chi_i) = \{(m,n):m<n \textrm{ and } \chi_i(m) > \chi_i (n) \}.
		\end{align*}
		\item the sign of permutation $\chi_i$ is positive $(+1)$ if $N(\chi_i) $ is even and negative $(-1)$ if  $N(\chi_i)$ is odd \cite{Perm_group}.
	\end{itemize}
	In \cite{52214.71}, it is shown that the alternative group consisting of the permutations with the same sign has $d_{min}=3$ and that the size of this subset is given by $M!/2$. Therefore, in this work we consider the alternative group that consists of the permutations which have a positive sign.
	
	\subsubsection{Encoding scheme}
	Due to the special structure of the alternative group, it is possible to design an efficient scheme that maps the incoming data symbols to the corresponding waveforms in the selected subset. We first provide the following Lemma.
	
	\begin{lemma}\label{Lemma1}
		According to the lexicographic order, the two adjacent permutations given by $2i$ and $(2i+1)$ have different signs.
	\end{lemma}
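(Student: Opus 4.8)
The plan is to exploit the block structure that lexicographic order imposes on the last two positions. First I would partition the $M!$ permutations according to the values occupying positions $1$ through $M-2$. For any fixed choice of these first $M-2$ entries, exactly two values remain, and they can fill the last two positions in exactly $2!=2$ ways. Since lexicographic order sorts first by the leading entries, all permutations sharing a common length-$(M-2)$ prefix are contiguous in the list; as there are precisely two of them per prefix, the prefixes appear as consecutive blocks of ranks $\{0,1\}, \{2,3\}, \{4,5\}, \ldots$ Consequently, the permutations at ranks $2i$ and $2i+1$ always share the same first $M-2$ entries, and the only freedom lies in the final two positions.

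Next I would compare the two members of such a block. Within the block, lexicographic order places the smaller of the two remaining values first, so the permutation at rank $2i$ has these two values in ascending order in positions $M-1$ and $M$, while the permutation at rank $2i+1$ has them in descending order. Hence the rank-$(2i+1)$ permutation is obtained from the rank-$2i$ permutation by exchanging the entries in positions $M-1$ and $M$, that is, by a single transposition.

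Finally, I would invoke the standard fact that a single transposition changes the parity of the number of inversions $N(\chi_i)$ and therefore flips the sign of a permutation. Since the two permutations differ by exactly one transposition, they must carry opposite signs, which establishes the claim.

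The main obstacle is the first step: one has to argue carefully that the even/odd rank boundary coincides exactly with the block boundary induced by the length-$(M-2)$ prefix. This hinges on two facts, namely that lexicographic order keeps permutations with a common prefix contiguous and that each such prefix admits exactly $2!=2$ completions, so that successive prefixes land precisely on the rank pairs $(2i, 2i+1)$. Once this alignment is secured, the remaining steps, the single transposition and its effect on the sign, follow immediately.
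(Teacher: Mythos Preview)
Your proposal is correct and takes a genuinely different route from the paper. The paper proceeds algebraically: it expresses the number of inversions $N(\chi_r)$ via the Lehmer code in the factorial number system, obtaining a formula of the form
\[
\textrm{sign}(\chi_r) = (-1)^{\textrm{mod}(r,2)} \prod_{j=2}^{n} (-1)^{(1-j)\lfloor r/j!\rfloor},
\]
and then observes that since $j!$ is even for $j\ge 2$, one has $\lfloor 2i/j!\rfloor=\lfloor (2i+1)/j!\rfloor$ for every term in the product, so the two signs differ only by the leading factor $(-1)^{\textrm{mod}(r,2)}$. Your argument is instead purely combinatorial: you identify that lexicographic order groups permutations into contiguous blocks of size $2!=2$ according to their length-$(M-2)$ prefix, so ranks $2i$ and $2i+1$ always share that prefix and differ by a single transposition of the last two positions, which flips the sign. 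Your approach is shorter and requires no machinery beyond the basic definition of lexicographic order and the effect of a transposition on parity; the paper's approach, while heavier, yields as a by-product an explicit closed-form expression for $\textrm{sign}(\chi_r)$ in terms of the rank $r$, which may be of independent use.
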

	\begin{proof}
		Given in Appendix \ref{appendixA}.
	\end{proof}

	According to Lemma \ref{Lemma1}, incoming data symbols can be mapped into permutations as follows. If the natural number	of the incoming data symbol is given by $i$, according to the factorial number system, we compute the sign of the $(2i)$-th permutation and the $(2i+1)$-th permutation in the lexicographic order. Then, the waveform corresponding to the permutation with the positive sign is selected for transmission. At the communication receiver, we first detect the permutation corresponding to the received signal. If the $i$-th permutation is detected, then the natural number of the received data symbol is taken as $\lfloor i/2\rfloor$. Therefore, even under this subset the mapping from incoming data symbols to corresponding waveforms can be implemented very efficiently without the use of a large look-up table.
	
	\subsubsection{Receiver implementation}	
	While an efficient optimal receiver implementation is not feasible under this subset, we note that the sub-optimal receiver in Algorithm \ref{Algorithm1} with $d\!=\!2$ results in close to optimal performance. This can be explained as follows. If the Hungarian algorithm outputs a permutation that does not belong to $\mathbf{\mathcal{S}}$, then it has a negative sign. As a result, all $M(M\!-\!1)/2$ neighbors of that permutation with the Hamming distance 2 have a positive sign and thus belong to $\mathbf{\mathcal{S}}$. The most common errors result in receiving one or two tones erroneously. Therefore, in the case where the output of the Hungarian algorithm in not within $\mathbf{\mathcal{S}}$, there is a high probability that one of its nearest neighbors with Hamming distance 2 is the highest correlated waveform within $\mathbf{\mathcal{S}}$. As a result, our proposed sub-optimal receiver obtains close to optimal BLER performance under this subset and this is further illustrated using simulations in Section \ref{Sec-Simu}.
	
	\subsubsection{Block error rate}
	As the minimum Hamming distance is now increased to $3$ compared to the universal set, it is possible to detect up to two errors and correct any single error. As such, this subset of permutations provides a coding scheme equivalent to single parity-bit coding in communication networks. Further, due to the symmetric structure in pairwise error probabilities in the alternative group, the union bounds given in \eqref{eq8.1} and \eqref{eq8.2} can be simplified to,
	\begin{align}\label{eq17.1}
	P_e^{UB} &= \sum_{l=3}^{M}  A_l \mathcal{Q}\biggl(\sqrt{\dfrac{NEl}{N_0 M}}\biggr), \\ 
	P_e^{UB} &= \sum_{l=3}^{M} \dfrac{A_l}{\pi} \int_{0}^{\pi/2} \biggl[\prod_{n=1}^{N}\bigg(\dfrac{\alpha_l\sin^2\theta}{\lambda_{n}+\alpha_l\sin^2\theta}\bigg)\exp\bigg(\sum_{n=1}^{N}\dfrac{-K|(\mathbf{V}^H\Delta)_n|^2}{\lambda_{n}+\alpha_l\sin^2\theta}\bigg)\biggr] d\theta,\label{eq17.2}
	\end{align}
	under the AWGN channel and the Rician fading channel, respectively. A special property of the alternative group is that for any permutation with a positive sign, all of its neighbors with Hamming distance $3$ also have a positive sign and vice versa. As such, we can derive the NN approximation by only considering the pairwise error probabilities corresponding to the nearest neighbors which have a Hamming distance of $3$ and obtain
	\begin{align}\label{eq17.3}
	P_e^{NN} &= \dfrac{M\!(M\!-\!1)\!(M\!-\!2)}{3} \mathcal{Q}\biggl(\sqrt{\dfrac{3NE}{N_0 M}}\biggr), \\
	P_e^{NN} &= \dfrac{M\!(M\!-\!1)\!(M\!-\!2)}{3\pi} \int_{0}^{\pi/2} \biggl[\prod_{n=1}^{N}\bigg(\dfrac{\alpha_3\sin^2\theta}{\lambda_{n}+\alpha_3\sin^2\theta}\bigg)\exp\bigg(\sum_{n=1}^{N}\dfrac{-K|(\mathbf{V}^H\Delta)_n|^2}{\lambda_{n}+\alpha_3\sin^2\theta}\bigg)\biggr] d\theta, \label{eq17.4}
	\end{align}
	as an alternative approximation which is more accurate than the union bound when $M$ is large under the AWGN channel and the Rician fading channel, respectively.

	\section{Subset selection based on radar performance}\label{Sec-Radar}	
	In this section, we consider the selection of a subset of permutations to improve the radar performance of the transmitted waveform. We focus on the ambiguity function (AF) which is an important analytical tool used in radar signal analysis. In radar sensing applications, the performance of the radar waveform is measured based on local and global accuracy. The local accuracy is defined as the measurement accuracy and the resolution, which can be measured in terms of the range resolution, the Doppler resolution and the efficiency of spectral usage, and depends on the properties of the mainlobe of the AF. The global accuracy is defined as the level of interference which can be measured by the sidelobe behavior of the AF \cite{0471663085}.
	In \cite{12967}, it is shown that the auto-correlation function, which determines the properties of the mainlobe, does not depend on the frequency order. As such, the local accuracy remains the same for all the permutations and it will not be impacted by selecting a subset of permutations. Therefore, in this work, we focus on improving the global accuracy of radar sensing via the selection of a subset of permutations such that the number of larger peaks in the sidelobe region of the AF is minimized.
	
	\subsection{Ambiguity Function}
	In radar sensing, the AF provides the matched filter output when the transmitted radar signal is received with a certain time delay and a Doppler shift. In other words, the AF provides a measure of dependence between the transmitted signal and its frequency and time shifted replica \cite{0471663085}. For the transmitted waveform $s_i(t)$, the complex AF can be expressed as,
	\begin{align}\label{eq23}
	\hat{A}(\tau,f_d) = \int_{-\infty}^{\infty} s_i^{*}(s)s_i(s-\tau)\,  \textrm{exp}(j2\pi f_d \tau) d\tau,
	\end{align}
	where $s_i^{*}(t)$ is the complex conjugate of $s_i(t)$, $\tau$ is the time delay relative to the output peak of the matched filter and $f_d$ is the Doppler mismatch. Then, the AF of the waveform is given by the magnitude of $\hat{A}(\tau,f_d)$. Without loss of generality, the signal energy, $E$, is normalized to unity. Next, we substitute \eqref{eq1} into \eqref{eq23} and do some mathematical manipulations to re-express  \eqref{eq23} as,
	\begin{align}\label{eq24}
	\hat{A}(\tau,f_d) = \dfrac{1}{M}\sum_{n=0}^{M-1} \textrm{exp}(j2\pi f_d\,nT) \biggl[\phi_{nn}(\tau,f_d) + \sum_{m=0,m\neq n}^{M-1}\phi_{nm}(\tau-(n-m)T,f_d)\biggr],
	\end{align}
	where $\phi_{nn}(\tau,f_d)$ and $\phi_{nm}(\tau,f_d)$ denote the auto-correlation function and the cross-correlation function of the $n$-th pulse of the transmitted waveform and the $m$-th pulse of its time and frequency shifted replica, respectively. After some mathematical manipulations these can be expressed as,
	\begin{align}\label{eq25}
	\phi_{nn}(\tau,f_d) &= \left\{\!\!
	\begin{array}{ll}
	\dfrac{(T\!-\!|\tau|)}{T}\dfrac{\sin(\pi f_d(T\!-\!|\tau|))}{\pi f_d(T\!-\!|\tau|)}\textrm{exp}(j\pi f_d(T\!+\!\tau)\!-\!j2\pi f_n^i \tau) \qquad \quad & \textrm{if } |\tau| \le T \\
	0 & \textrm{otherwise,}
	\end{array}
	\right.\\
	\phi_{nm}(\tau,f_d) &= \left\{\!\!
	\begin{array}{ll}
	\dfrac{(T\!-\!|\tau|)}{T}\dfrac{\sin(\pi \alpha (T\!-\!|\tau|))}{\pi \alpha(T\!-\!|\tau|)}\textrm{exp}(-j\pi \alpha (T\!+\!\tau)\!-\!j2\pi f_m^i \tau) \qquad \quad & \textrm{if } |\tau| \le T \\
	0 & \textrm{otherwise,}
	\end{array}
	\right.\label{eq26}
	\end{align}
	with $\alpha = f_n^i-f_m^i-f_d$. The matched filter output in the presence of zero delay is given by the zero-delay response $\hat{A}(0,f_d)$. When $\tau=0$, all the terms related to cross-correlation become zero as $|(n-m)T|>T, \forall m\neq n$. Therefore, the zero-delay response only depends on the auto-correlation function. As such, the zero-delay response can be simplified using \cite[Eq. (26)]{12967} as,
	\begin{align}\label{eq27}
	\hat{A}(0,f_d)= \dfrac{\sin(\pi Mf_dT)}{\pi Mf_dT}\textrm{exp}(j\pi Mf_dT).
	\end{align}
	From \eqref{eq27}, we can see that the zero-delay response of the AF does not depend on the frequency order of the permutation. As such, the Doppler resolution remains the same for all the permutations. 
	
	Next, we consider the matched filter output when there is no Doppler mismatch, which is given by the zero-Doppler response $\hat{A}(\tau,0)$. We note that the delay resolution only depends on the properties of the mainlobe of the AF which in turn relates to the auto-correlation function \cite{12967}. Therefore, in order to analyze the delay resolution of the proposed waveform, we first consider the part of $\hat{A}(\tau,0)$ determined by the auto-correlation function, $\phi_{nn}(\tau,f_d)$, and denote it by $\hat{A}^{'}(\tau,0)$. Again using \cite[Eq. (26)]{12967}, we can express this as, 
	\begin{align}\label{eq28}
	\hat{A}^{'}(\tau,0)= \left\{\!\!
	\begin{array}{ll}
	\dfrac{(T\!-\!|\tau|)}{T}\dfrac{\sin(\pi Mq\tau/T)}{M \sin(\pi q\tau/T)} \textrm{exp}\bigg(\!\!-j\pi\bigg(\!2f_c\tau\!+\!(M\!-\!1)\dfrac{\tau}{T}\bigg)\bigg) \qquad \quad & \textrm{if } |\tau| \le T \\
	0 & \textrm{otherwise,}
	\end{array}
	\right.
	\end{align}
	where $f_c$ denotes the carrier frequency of the transmitted waveform. From \eqref{eq28}, we can see that $\hat{A}^{'}(\tau,0)$ does not depend on the frequency order of the permutation. As a result, the location of the first zero in the zero-Doppler response does not change based on the permutation and the delay resolution remains the same for all the permutations. We also note that the inverse power spectrum is equivalent to the auto-correlation function \cite{0471663085}. Therefore, the power spectrum and the efficiency of spectral usage remains constant for all the permutations.    
	
	Let us now focus on the sidelobe behavior determined by the cross-correlation terms. From \eqref{eq26} and the properties of the sinc(.) function it follows that the peak of each $\phi_{nm}(\tau-(n-m)T,f_d)$ term occurs at $\tau_p=(n-m)T$ and $f_p=f_n^i-f_m^i$. Without loss of generality, let us consider the positive delay axis with $n>m$ and $L=n-m$. We can show that the peak of each $\phi_{nm}$ term occurs at $\tau_p=LT$ and $f_p = q\Delta_{L,m}/T$, where $\Delta_{L,m}$ denotes the number of frequency spacings $\Delta_f$ between the frequency tones in $m$-th and $(m\!+\!L)$-th positions in the transmitted permutation. As such, we can see that the behavior of the sidelobes are determined by the frequency order of the permutation. Given that $\Delta_{L,m}$ is an integer, peak points along a specific delay axis can be obtained by setting $f_d=\pm kq/T$, where $k$ is an positive integer. First, we will analyze the behavior of a single cross-correlation term in \eqref{eq24} at these peak points along the delay axis $\tau=\tau_p$ and write
	\begin{align}\label{eq29}
	\phi_{nm}(\tau_p-(n-m)T,f_d) = \dfrac{\sin(\pi q(\Delta_{L,m}\pm k))}{\pi q(\Delta_{L,m}\pm k)} \textrm{exp}\bigg(\!-j\pi(\Delta_{L,m}\pm k)\bigg).
	\end{align}
	Based on the behavior of the sinc(.) function, we can see that in the sidelobe region, $\phi_{nm}(\tau_p-(n-m)T,f_d)$ has a peak value of $1$ when $k=|\Delta_{L,m}|$ and for all other integer coordinate pairs of normalized delay and Doppler shift, $(\tau/T,f_dT)$ the value is $0$. 
	
	Based on this, we next analyze the combined behavior of all the cross-correlation terms in \eqref{eq24}. Let us first consider the case where $\Delta_{L,m}$ is unique for each $L$. In this case, each sidelobe term $\phi_{nm}$ has a unique normalized Doppler shift $f_dT = q\Delta_{L,m}$ related to the maximum point at the fixed normalized delay $\tau/T=L$. Therefore, the peaks of the $\phi_{nm}$ terms will be spaced by integral values in $\tau/T$ and $f_dT$ resulting in no superposition among the peaks. Such permutations are known as Costas permutations. In \cite{12967}, it is shown that for Costas permutations, on average the height of peaks in the sidelobe region is limited to $1/M$, while the random phasing of the sidelobes can produce isolated peaks in the order of $6$ dB over the $1/M$ value. Due to this property, Costas permutations are considered to provide higher global accuracy in radar sensing applications. However, only a very few number of Costas codes are available for larger $M$ and for some $M$ it is shown that there exists no Costas permutations \cite{5013-2}. While only a single permutation is needed in radar applications, in joint radar and communication systems, we require a large number of permutations to achieve a reasonable data rate. Therefore, in this work, we generalize the behavior of Costas permutations by relaxing the strict requirement of a unique $\Delta_{L,m}$ for each $L$ and propose a radar based permutation subset as follows. 
		
	\subsection{Radar based Permutation Subset}\label{costas_based}
	From \eqref{eq29}, we can see that if the same $\Delta_{L,m}$ is repeated several times for a given $L$, then several $\phi_{nm}$ terms will have their peak at the same $(\tau,f_d)$ coordinate. To evaluate the peak sidelobe height resulting from the superposition of several $\phi_{nm}$ terms, we define a difference triangle, as in \cite{12967}, where each row represents one of the possible values, $L=1,...,M-1$, and each column represents one of the possible values, $m = 0,1,...,M-1-L$, and the value in $m$-th column in $L$-th row is $\Delta_{L,m}$. Next, we can compute the maximum number of repeats in the difference triangle for any row and denote this as $g_L$. Therefore, we have $g_L$ cross-correlation terms overlapping at their peak coordinate resulting in a maximum sidelobe height of $g_L/M$ in the delay axis $\tau=LT$. As a result, over the entire delay axis, on average the height of the sidelobe peaks would be limited to ${\underset{L \in \{1,...,M-1\}} {\textrm{max}}} (g_L/M)$. Therefore, we can see that larger the value of ${\underset{L \in \{1,...,M-1\}} {\textrm{max}}} (g_L/M)$, larger the peak sidelobe height in the sidelobe region. Thus, we use the maximum number of repeats in the difference triangle as a measure of global accuracy in radar sensing. If the maximum number of repeats is small for a given permutation, that results in a high global accuracy for that corresponding waveform. Therefore, we first order the permutations in ascending order with respect to the maximum number of repeats in the difference triangle. Then, the permutations are selected starting from the smallest number of repeats and continued until we reach the required number of permutations needed to achieve a sufficient data rate for communication.
	
	However, under this selection method, there is no special structure in the selected permutation subset. Therefore, the mapping between incoming data symbols and their respective permutations requires a look-up table. Similarly, at the communication receiver, an efficient optimal receiver implementation is not feasible. As such, either the IP based optimal receiver in \eqref{eq20} or the sub-optimal receiver in Algorithm~\ref{Algorithm1} is required. 
	
	\subsection{Discussion}\label{joint_subset}
	So far we have selected subsets to either improve communication performance or radar performance. In the following, we focus on two aspects of performance improvements namely, the communication error rate and the radar sidelobe height, and discuss their joint impact on subset selection.
	
	We first note that due to the symmetric structure of the universal permutation set, for any given permutation there exists a reverse permutation whose AF is an exact mirror image of the original permutation. As such, the peak sidelobe height of these two permutations remains the same. We also note that the reverse permutation can be obtained by interchanging $\lfloor M/2\rfloor$ disjoint positions in the original permutation. In \cite{Perm_group}, it is shown that any interchange of two positions changes the sign of a permutation. Therefore, when $\lfloor M/2\rfloor$ is odd, reverse permutation has the opposite sign of the original permutation. As such, when we select half of the permutations under the $d_{min}=3$ subset, they are selected such that the number of permutations with the same peak sidelobe height is halved. Thus, the radar performance of this subset is equivalent to the universal set. On the other hand, when $\lfloor M/2\rfloor$ is even, we cannot guarantee that the permutations are selected such that the number of permutations with the same peak sidelobe height is halved. However, we can expect this impact on the global accuracy of radar sensing to be small on the basis that no specific pattern is considered during the subset selection.
	
	Next, we discuss how the communication error rate and the radar sidelobe height can be combined together to select a common subset such that both the communication performance and radar performance can be improved. While the development of a systematic approach to select a common subset is beyond the scope of this paper, here we present an example which contains a potential method to select a subset of waveforms that can improve both communication and radar performances. We note that the communication performance depends on the minimum Hamming distance in the selected subset while the radar performance depends on the number of repeats in the difference triangle for the ordered frequency sequence related to a given permutation. Based on this observation, we first select the permutation subset based on the communication perspective. Then, we map a difference frequency sequence to the number sequence $\{1,2,...,M\}$ to obtain a set of waveforms with a lower number of repeats in their respective difference triangles. 
	We note that while this method can improve the radar performance of the pre-selected permutation subset as illustrated in the following example, the development of a systematic approach to identify the best mapping is beyond the scope of this paper. 
	
	\vspace{0.25cm}
	\noindent
	\textbf{Example 1}: We set $M\!=\!8$ and select $M!/2\!=\!20160$ permutations from a total of $M!\!=\!40320$ permutations by considering the $d_{min}=3$ subset. Table \ref{table1} provides the number of permutations in this subset for each possible maximum number of repeats in the difference triangle under two different frequency mappings. Under the first mapping, we consider the basic frequency mapping of $[1,2,3,4,5,6,7,8] \rightarrow [f_1,f_2,f_3,f_4,f_5,f_6,f_7,f_8]$ and under the second mapping, we consider the mapping of $[1,2,3,4,5,6,7,8] \rightarrow [f_1,f_2,f_3,f_4,f_5,f_6,f_8,f_7]$. From the table, we can see that compared to the basic frequency mapping, the second frequency mapping removes the two waveforms with $6$ repeats (the maximum number) and also reduces the number of waveforms with $5,4,3$ and $1$ repeats by increasing the number of waveforms with $0$ and $2$ repeats. Further, we can see that the number of Costas permutations, which have the best radar performance, within the selected subset has increased by $124$ (from $160$ to $284$). Therefore, the global accuracy of the radar sensing is higher under the second frequency mapping while maintaining the same communication error rate and data rate.
	\begin{table}[H]
		\centering
		\caption{Number of permutations with maximum number of repeats in the difference triangle}
		\begin{tabular}
			{ |C{5cm}|C{5cm}|C{5cm}| }
			\hline
			Maximum number of repeats & No. of permutations under mapping 1 & No. of permutations under mapping 2 \\ \hline 
			6 & 2 & 0  \\ \hline
			5 & 16 & 8  \\ \hline
			4 & 134 & 130  \\ \hline		
			3 & 1212 & 1116 \\ \hline
			2 & 7100 & 7520 \\ \hline
			1 & 11536 & 11102  \\ \hline
			0 & 160 & 284 \\ \hline	
		\end{tabular}
		\label{table1}
	\end{table} 
	
	\section{Numerical Examples}\label{Sec-Simu}
	In this section, we provide numerical examples demonstrating the performance of different permutation subsets and their comparison against existing approaches. 
	
	While the analysis of the correlated fading channels presented in Section \ref{BLER} remains true for all $\mathbf{C}_u$, in the following, we simply consider the exponential correlation model \cite{2107.14396}. As such, the $(i,j)$-th entry of $\mathbf{C}_u$ is expressed as $\mathbf{C}_u(i,j) = \rho^{|i-j|}$ for a given correlation coefficient $\rho \in \{0,1\}$. We set the energy of the waveform $E = 1$ and the frequency separation $\Delta f = 1/T$ with $T = 1$, to maintain the orthogonality between frequencies. 
		
	\begin{figure}
		\centering
		\includegraphics[width=0.6\textwidth]{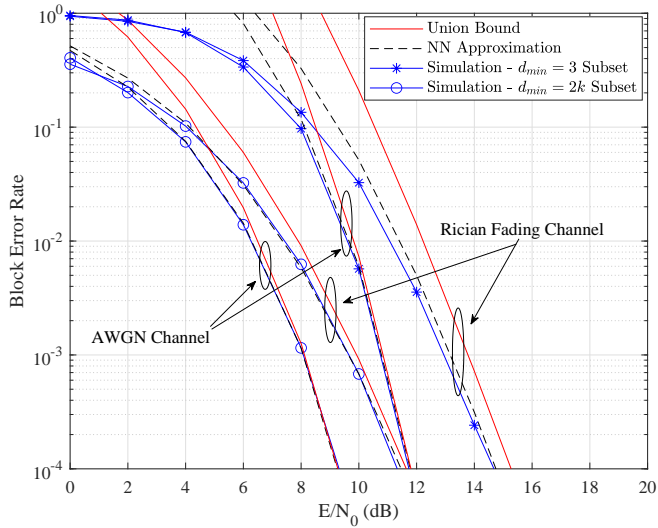}
		\captionsetup{justification=centering}
		\caption{The BLER versus the average received SNR under Hamming distance based permutation subsets with $M=8, N=4, K=2$ and $\rho=0.5$.}
		\label{figure6}
	\end{figure}
	Fig. \ref{figure6} plots the BLER versus the received signal-to-noise-ratio (SNR) for a communication receiver with $N=4$ antennas. We set $M=8$ frequency tones and plot the BLER performance for both the AWGN channel and a correlated fading channel with Rician factor $K=2$ and correlation coefficient $\rho=0.5$. We consider the $d_{min}=3$ subset and the $d_{min}=2k$ subset with $k=2$ and select $M!/2=20160$ and $(M/2)!=24$ waveforms, respectively. The union bounds in \eqref{eq12.1}, \eqref{eq12.2}, \eqref{eq17.1}, \eqref{eq17.2} and the NN approximations in \eqref{eq13.1}, \eqref{eq13.2}, \eqref{eq17.3}, \eqref{eq17.4} are used to generate te analytical approximations. From the plot we can observe that under the AWGN channel, the union bound accurately follows the simulation results in the high SNR regime, while the NN approximation is more accurate in the low SNR regime. However, under the Rician fading channel, the NN approximation is more accurate compared to the union bound even in the high SNR regime.
			
	Fig. \ref{figure3} plots the BLER versus the received SNR for an AWGN channel under the proposed permutation subsets and compares them against the universal set considered in \cite{3172111}. We set $N=2$ and $M=6$. Under the universal set, we consider all $M!=720$ waveforms. Under the $d_{min}=3$ subset, we select $M!/2=360$ waveforms such that each selected permutation has a positive sign. Under the $d_{min}=2k$ subset, we set $k=2$ thus obtaining a minimum Hamming distance of $4$ and select $(M/2)!=6$ waveforms. From the plot, we can observe that at $10$ dB we achieve a $4$-fold and more than $10$-fold reduction in the BLER compared to \cite{3172111} when $d_{min}$ is increased to $3$ and $4$, respectively. As expected the BLER reduces with increasing minimum Hamming distance, but at the expense of data rate. As the size of the subset decreases we have fewer waveforms to transmit data, thus reducing the resulting data rate. More specifically, the data rate drops by around $10\%$ and $70\%$ when we increase $d_{min}$ from $2$ to $3$ and from $3$ to $4$. As such, there is a clear trade-off between the data rate and the BLER and it is important to select a proper subset based on the communication requirements. 
	\begin{figure}
		\centering
		\includegraphics[width=0.6\textwidth]{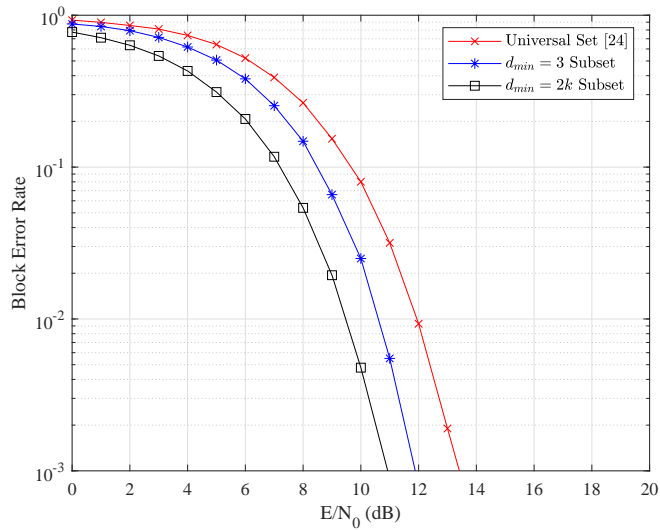}
		\captionsetup{justification=centering}
		\caption{The BLER versus the average received SNR under different Hamming distance based permutation subsets with $M = 6$ and $N = 2$.}
		\label{figure3}
	\end{figure}	

	Fig. \ref{figure4} plots the BLER versus the received SNR for a correlated fading channel with Rician factor $K=4$ and correlation coefficient $\rho=0.5$ under the proposed $d_{min}=3$ subset and the radar based subset in Section \ref{costas_based}. We set $N=4$ and $M=6$. We consider both the optimal receiver implementation using \eqref{eq20}, which is implemented via Matlab mixed-integer linear programming (MILP) toolbox, and the sub-optimal receiver implementation proposed in Algorithm \ref{Algorithm1}. Under the $d_{min}=3$ subset, $M!/2=360$ waveforms were selected such that each corresponding permutation has a positive sign. Under the radar based subset, we select $360$ waveforms to obtain a similar data rate such that the maximum number of repeats in the difference triangle for any selected permutation is less than or equal to one. 	
	From the plot, we can observe that the sub-optimal receiver in Algorithm \ref{Algorithm1} provides almost identical BLER results compared to the optimal receiver under the $d_{min}=3$ subset. This can be explained by the special structure of this subset, i.e., for a permutation not included in the selected subset, all Hamming distance $2$ neighbors belong to the selected subset. On the other hand, under the radar based subset, we can see that the BLER of the sub-optimal receiver is slightly higher than the optimal receiver, specially in the low SNR regime while in the high SNR regime it provides almost identical BLER results compared to the optimal receiver. Further, we can observe that at $10$ dB we achieve a $4$-fold reduction in the BLER when the $d_{min}=3$ subset is compared against the radar based subset. As such, we have sacrificed the communication performance in order to gain a better radar performance. 
	\begin{figure}
		\centering
		\includegraphics[width=0.6\textwidth]{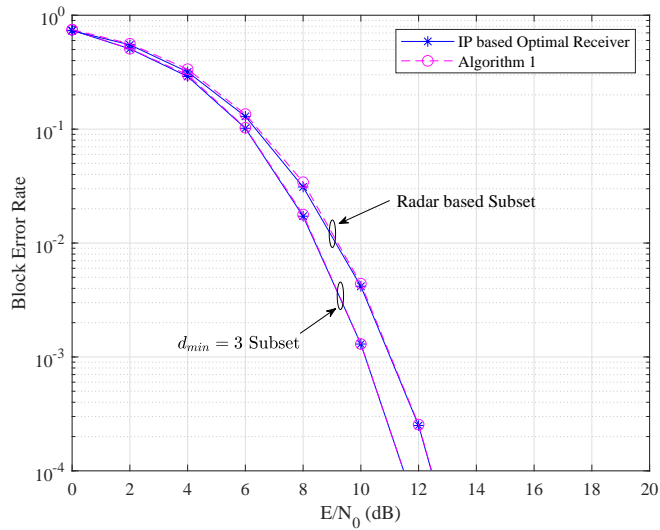}
		\captionsetup{justification=centering}
		\caption{The BLER versus the average received SNR under $d_{min}=3$ and radar based subsets with $M=6, N=4, K=4$ and $\rho=0.5$.}
		\label{figure4}
	\end{figure}
	
	Next, we focus on the radar performance of the transmitted waveform. We consider the same setup as in Fig. \ref{figure4} and compare the height distribution of the peak sidelobe under the proposed $d_{min}=3$ subset, the radar based subset and the universal set considered in \cite{3172111}. Fig. \ref{figure5} plots the histogram of the peak sidelobe height distribution when the mainlobe height is normalized to one. For the universal set, peak sidelobe heights vary from $0.2$ to $0.85$ with the mean peak sidelobe height of $0.3791$. Since, $\lfloor M/2\rfloor$ is odd, under the $d_{min}=3$ subset, the waveforms are selected such that the number of permutations with the same peak sidelobe height is halved. As a result, under the $d_{min}=3$ subset, the mean peak sidelobe height remains at $0.3791$. Therefore, there is no change in the radar global accuracy by selecting $360$ waveforms under the $d_{min}=3$ subset. On the other hand, under the radar based subset, the larger peak sidelobe heights have been removed and the highest peak sidelobe height is limited to $0.45$ with the mean peak sidelobe height of $0.3236$. This clearly indicates that the radar based subset achieves better global accuracy in radar sensing in comparison to $d_{min}=3$ subset and the universal set.
	\begin{figure}
		\centering
		\includegraphics[width=0.6\textwidth]{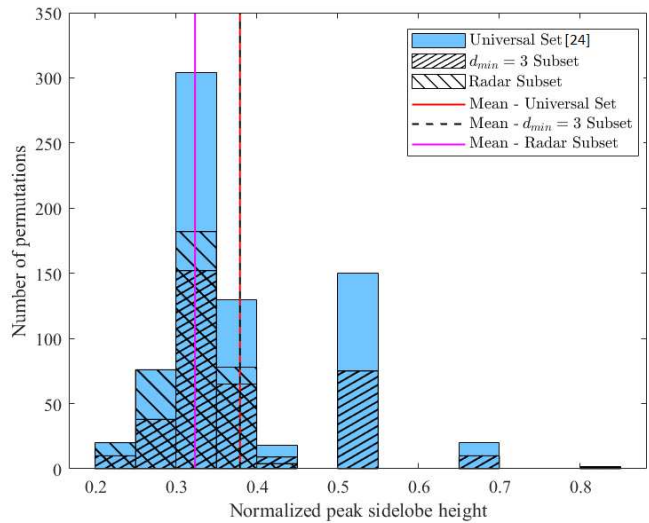}
		\captionsetup{justification=centering}
		\caption{Normalized peak sidelobe height distribution under $d_{min}=3$ and radar based subsets with $360$ waveforms and $M=6$.}
		\label{figure5}
	\end{figure}

	 Fig. \ref{figure7} and Fig. \ref{figure8} plot the AF of the transmitted waveform under two permutations for $M = 8$ frequency tones. Subplots (a) and (b) present the three-dimensional surface plot and the contour plot of the magnitude of $\hat{A}(\tau,f_d)$ versus the normalized delay $\tau/T$ and normalized Doppler $f_dT$, respectively. The maximum value of the AF is one due to normalized the signal energy. As expected, the maximum value occurs at $\tau=f_d=0$ and the behavior of the two waveforms is similar around the origin indicating that the local accuracy for all the permutations remains the same. On the other hand, it can be observed that the broader structure of the AF changes significantly based on the permutation corresponding to the transmitted waveform. In Fig. \ref{figure7}, we consider the permutation $[f_1,f_2,f_3,f_4,f_5,f_6,f_7,f_8]$ which has $6$ repeats in each row of the difference triangle. As such, we can see that there are peaks of height $0.875$ in the sidelobe region, which reduces the global accuracy of radar sensing. On the other hand, in Fig. \ref{figure8}, we consider the permutation $[f_1,f_2,f_4,f_5,f_8,f_7,f_6,f_3]$ which only has a maximum of one repeat in the difference triangle. As such, the height of peaks in the sidelobe region is limited to $0.25$ thus improving the global accuracy agreeing with out discussion in Section \ref{Sec-Radar}.
	\begin{figure}
		\begin{subfigure}{.475\textwidth}
			\centering
			\includegraphics[width=\textwidth]{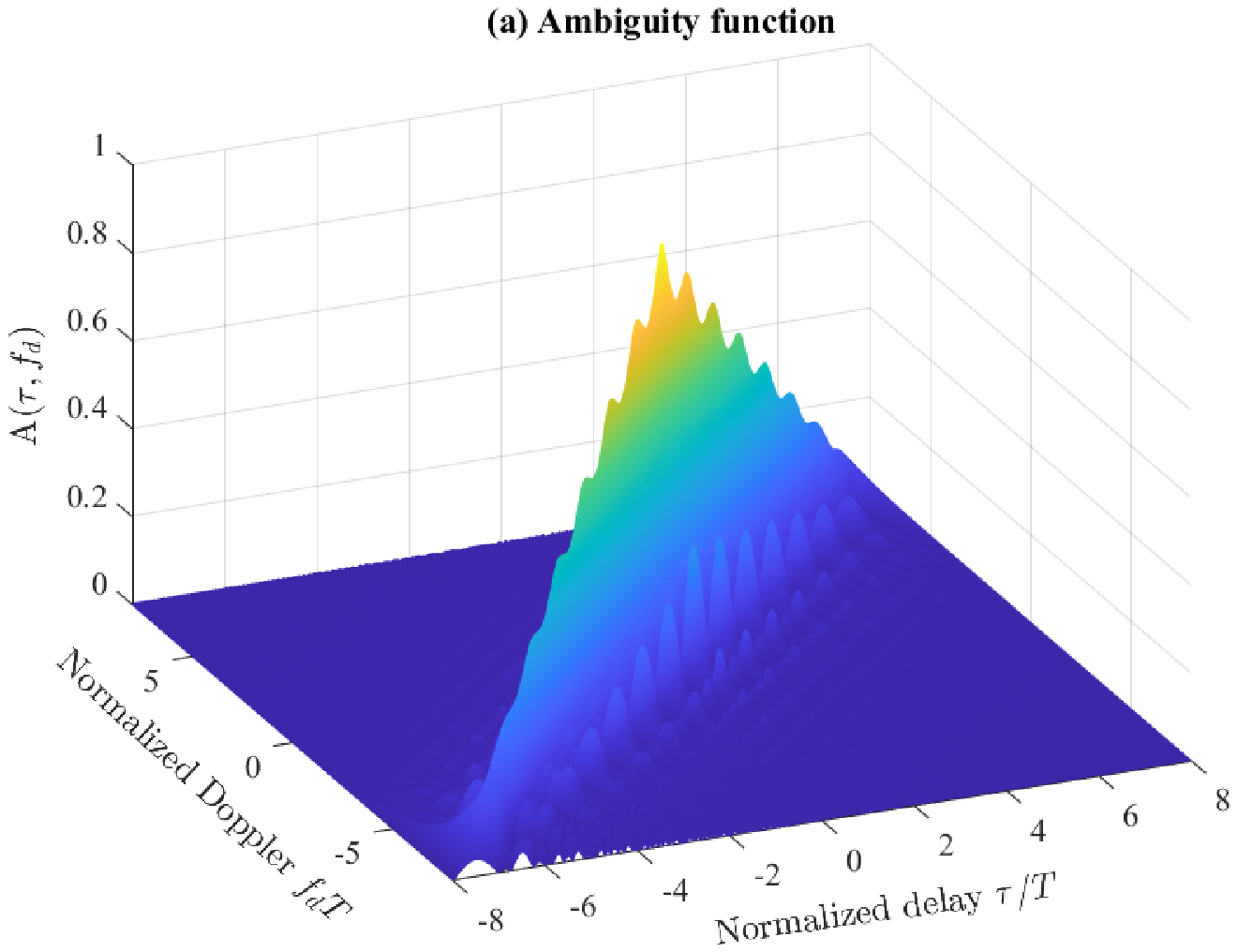}
		\end{subfigure}
		\begin{subfigure}{.475\textwidth}
			\centering
			\includegraphics[width=\textwidth]{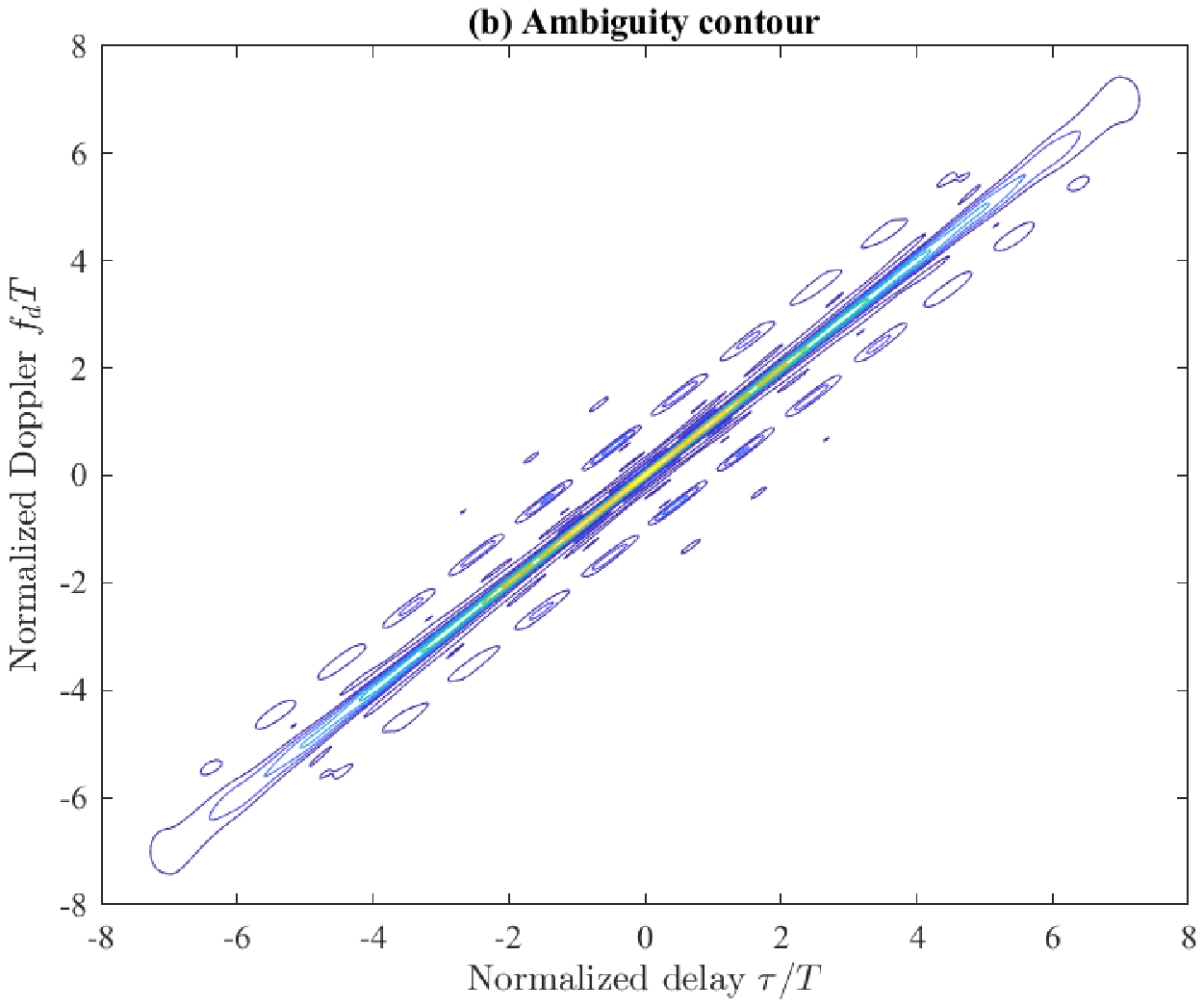}
		\end{subfigure}
		\caption{Ambiguity function of the waveform related to permutation $[f_1,f_2,f_3,f_4,f_5,f_6,f_7,f_8]$.}
		\label{figure7}
	\end{figure} 
	\begin{figure}
		\begin{subfigure}{.475\textwidth}
			\centering
			\includegraphics[width=\textwidth]{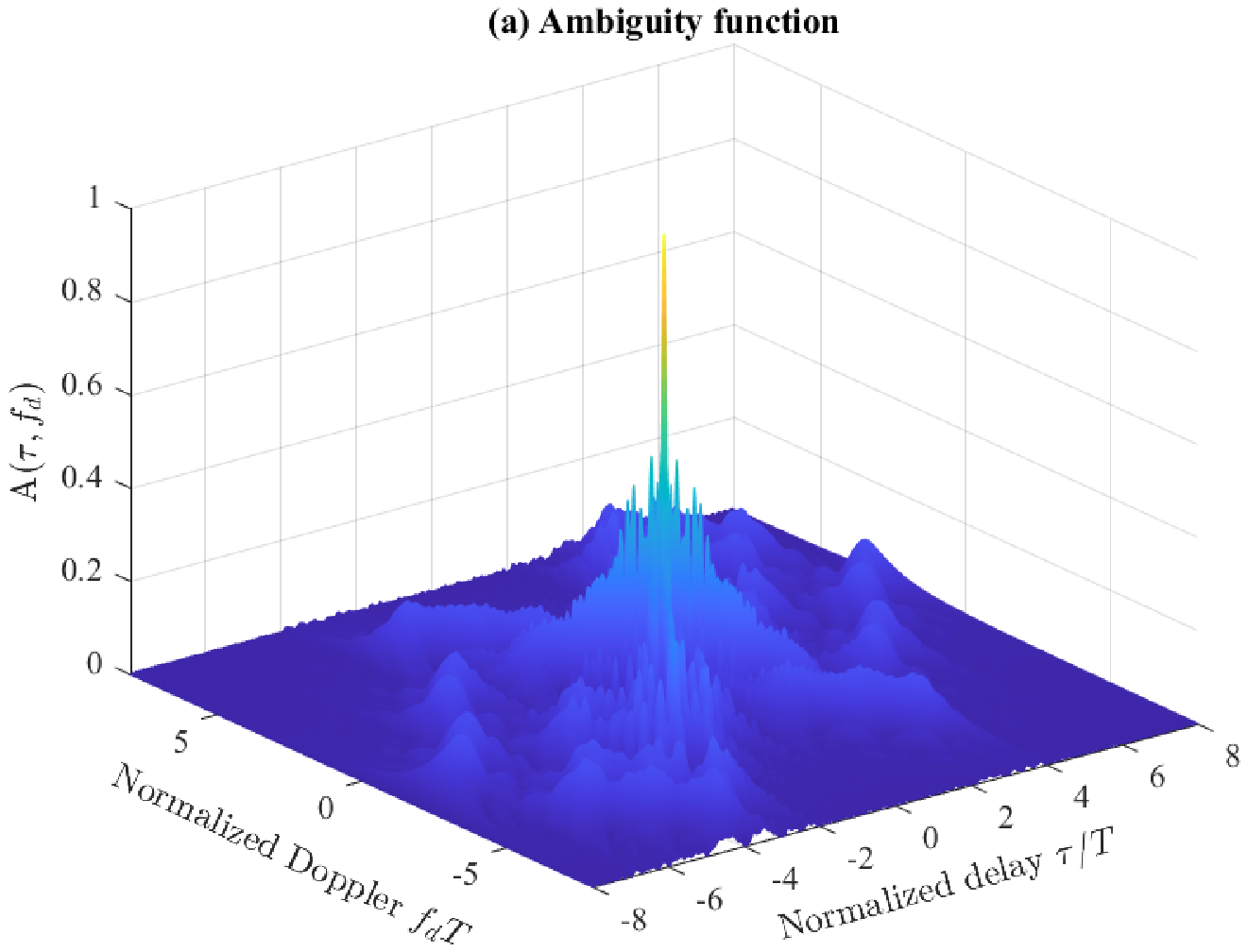}
		\end{subfigure}
		\begin{subfigure}{.475\textwidth}
			\centering
			\includegraphics[width=\textwidth]{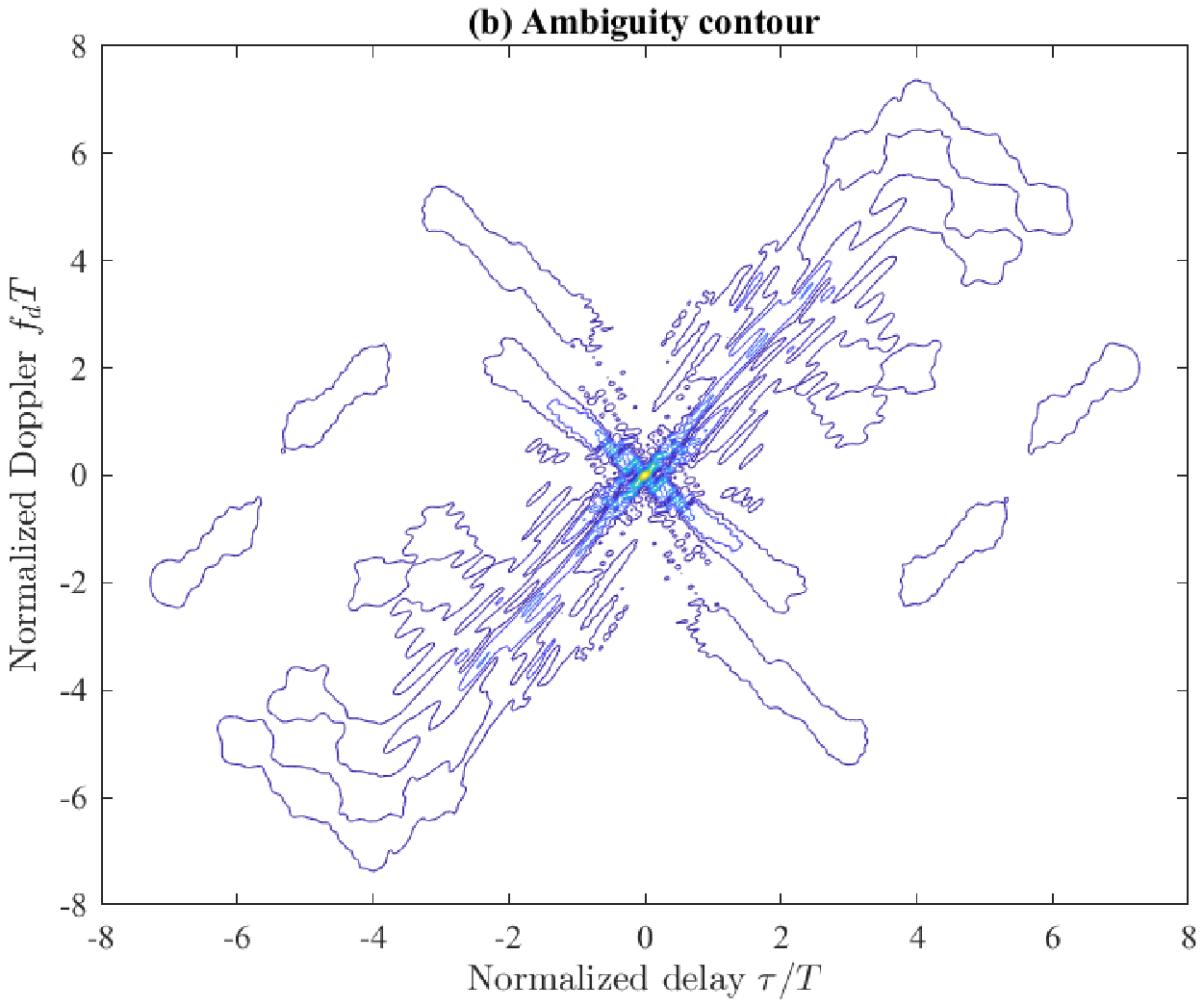}
		\end{subfigure}
		\caption{Ambiguity function of the waveform related to permutation $[f_1,f_2,f_4,f_5,f_8,f_7,f_6,f_3]$.}
		\label{figure8}
	\end{figure} 

	Next, we focus on the distribution of the permutations with respect to the maximum number of repeats in the difference triangle. Fig. \ref{figure10} plots the histogram of the maximum number of repeats in the difference triangle for the universal set with $M=7$ frequency tones. From the figure, we can see that there are only $200$ Costas permutations out of $M!=5040$. In order to improve the data rate, we can select up to $3262$ permutations with a maximum of $1$ repeat which increases the date rate by around $50\%$. As such, in order to increase the data rate further we need to go for $2$ repeats, $3$ repeats and so on. However, the consideration of permutations with a larger number of repeats reduces the radar performance due to larger sidelobe heights. Therefore, both the communication data rate and the radar performance need to be considered when selecting a proper subset size. 
	\begin{figure}
		\centering
		\includegraphics[width=0.6\textwidth]{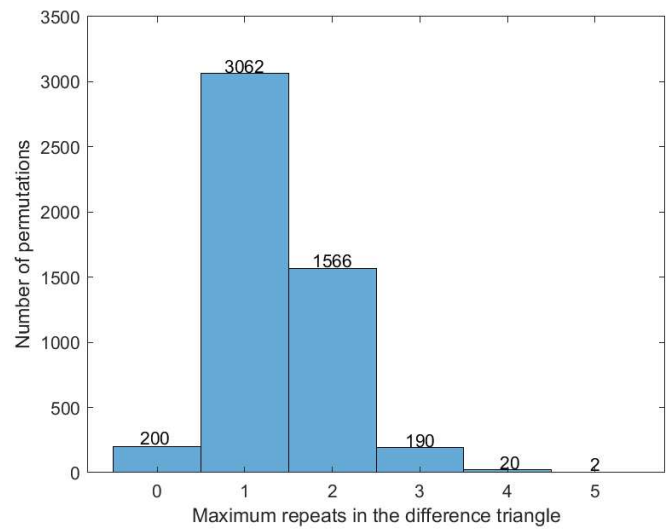}
		\captionsetup{justification=centering}
		\caption{Distribution of the maximum number of repeats in the difference triangle under the universal set with $M=7$.}
		\label{figure10}
	\end{figure}
	
	\section{Conclusion}\label{Sec-Conc}
	The waveform design of joint radar and communication systems is investigated and a new subset selection process is presented for the random stepped frequency permutation waveform in order to improve the communication and radar performances. First, we showed that the existing receiver implementation fails when a subset of permutations is selected from the universal set. As such, we proposed an optimal communication receiver based on integer programming to handle any subset of permutations followed by a more efficient sub-optimal receiver based on the Hungarian algorithm. Next, we analyzed the block error rate at the communication receiver under the optimum maximum likelihood detection, and proposed two methods to select a permutation subset with a given minimum Hamming distance such that the communication error rate can be improved. We also proposed an efficient encoding scheme to map the incoming data symbols to their corresponding permutations under these subsets. From the radar perspective, we analyzed the ambiguity function and showed that local accuracy does not depend on the subset selection. Finally, we extended the properties of Costas arrays and proposed another subset selection method to reduce the maximum sidelobe height such that the global accuracy of radar sensing can be improved.
	
	In this work, we introduced the remapping of the frequency tones to the symbol set used to generate permutations as a potential method to improve both the communication and radar performances of the selected permutation subset. One desirable future extension would be the development of a systematic process to obtain the best frequency mapping. We also note that in communication networks, the information generally takes the form of a bit stream and the number of data symbols generally take a power of two. As such, an extension to the bit error rate analysis by selecting a subset with a power of two permutations would formulate another challenging problem.
	
	\begin{appendices}		
		\section{Proof of Lemma \ref{Lemma1}}\label{appendixA}		
		Let us consider the $r$-th permutation in the lexicographic order, given by $\chi_r$, where $r \in \{0,1,...,M!-1\}$. We note that the number of inversions in the $r$-th permutation can be computed as the sum of the Lehmer code (that is similar to the inversion vector) using the factorial number system and can be expressed as \cite{knuth1998volume},
		\begin{align}\label{app_eq1}
		N(\chi_r) = \sum_{j=1}^{M} \!\!\!\mod\bigg(r_j,j\bigg),
		\end{align}
		where $r_j = \left \lfloor r_{j-1}/(j-1)\right \rfloor$ with $r_1=r$ and $\!\!\!\mod(r_j,j)$ denotes the remainder when $r_j$ is divided by $j$. 
		Using the method of induction, it can be shown that $r_{j} = \left \lfloor r/(j-1)!\right \rfloor$. Therefore, the number of inversions given in \eqref{app_eq1} can be written as,
		\begin{align}\label{app_eq2}
		N(\chi_r)  =\!\!\!\mod(r,2)\!+\!2 \left \lfloor r/2\right \rfloor\!+\!\sum_{j=2}^{n} (1\!-\!j) \left \lfloor r/j!\right \rfloor,
		\end{align}
		when $n$ is selected such that $n! \le r < (n+1)!$. Based on this, the sign of the permutation $\chi_r$ can be expressed as \cite{Perm_group},
		\begin{align}\label{app_eq3}
		\textrm{sign}(\chi_r) = (-1)^{\textrm{mod}(r,2)} \times \prod_{j=2}^{n} (-1)^{(1-j) \left \lfloor r/j!\right \rfloor}.
		\end{align}	
		Let us now consider the two adjacent permutations given by the $(2i)$-th permutation and the $(2i+1)$-th permutation. Given that $n!$ is even for $n\ge 2$, we have $n! \le 2i, 2i+1 < (n+1)!, \; \forall i$. Therefore, the signs of these two permutations can be computed using \eqref{app_eq3} as
		\begin{align}\label{app_eq5}
		\textrm{sign}(\chi_{2i}) &= \prod_{j=2}^{n} (-1)^{k_j},\\
		\textrm{sign}(\chi_{2i+1}) &= (-1)\prod_{j=2}^{n} (-1)^{k_j},
		\end{align}
		where $k_j = (1-j)\left \lfloor (2i)/j!\right \rfloor = (1-j)\left \lfloor (2i+1)/j!\right \rfloor, \, \forall j\le n$. Thus, $\textrm{sign}(\chi_{2i+1}) = (-1)\times \textrm{sign}(\chi_{2i})$. This completes the proof of Lemma \ref{Lemma1}.
		
	\end{appendices}
	

\end{document}